\documentclass[copyright]{eptcs}


\usepackage{breakurl}        

\input{dcfs.tex}



\newcommand{\bdisplay}{\begin{description}\footnotesize\item[]}
\newcommand{\edisplay}{\end{description}}

\newcommand{\bquot}[1]{\begin{quotation}\small\noindent
  \textbf{#1}\hspace{\labelsep}\ignorespaces}
\newcommand{\equot}{\unskip\end{quotation}}

\def\card#1{\vert #1\vert}

\begin{document}%
     %
\title{Serializing the Parallelism in Parallel Communicating Pushdown Automata Systems}
\def\titlerunning{Serializing the Parallelism in Parallel Communicating Pushdown Automata Systems}
\author{M. Sakthi Balan
\institute{ECOM Research Lab --
  Education \& Research\\
  Infosys Technologies Limited --
  Bangalore -- 560100 -- India}
\email{sakthi\_muthiah@infosys.com}
}
\def\authorrunning{M.\,S.~Balan}
\maketitle

\newif\ifFinal
\Finaltrue
\ifFinal
\long\def\reminder#1\endreminder{\relax}\else
\long\def\reminder#1\endreminder{\par\medskip
\noindent {\bf REMINDER:} #1\par\medskip}\fi

\newcommand{\PCPA}{\mathit{PCPA}}
\newcommand{\RCPCPA}{\mathit{RCPCPA}}
\newcommand{\RCPCPAs}{\mathit{RCPCPAs}}
\newcommand{\RPCPA}{\mathit{RPCPA}}
\newcommand{\simpleRPCPA}{\mathit{simple\mbox{-}RPCPA}}
\newcommand{\IN}{\mathit{IN}}
\newcommand{\OUT}{\mathit{OUT}}
\newcommand{\dollar}{\$}
%

\begin{abstract} ~
We consider parallel communicating pushdown automata systems ($\PCPA$)
and define a property called {\em known communication} for it. We use
this property to prove that the power of a variant of $\PCPA$, called
returning centralized parallel communicating pushdown automata
($\RCPCPA$), is equivalent to that of multi-head pushdown automata. The
above result presents a new sub-class of returning parallel
communicating pushdown automata systems ($\RPCPA$) called
$\simpleRPCPA$ and we show that it can be written as a finite
intersection of multi-head pushdown automata systems.  
\end{abstract}

\section{Introduction}

Parallel communicating pushdown automata systems with communication
via stacks, studied by E.~Csu\-haj-Varj\'{u} et al~\cite{3}, is a collection
of pushdown automata working in parallel to accept a language. These
pushdown automata communicate among themselves through their stacks by
request. The component in need of some information from another
component introduces an appropriate query symbol in its stack and this
results in the whole stack information of the requested component
being transferred to the stack of the requesting component. The
parallel communicating automata systems is the counter part of the
parallel communicating grammar systems \cite{9,10}.

Similar to the parallel communicating version there is a cooperating
distributed version of collection of pushdown automata called as
cooperating distributed automata systems studied in \cite{4,5,6}
(called as multi-stack pushdown automata in \cite{4}). This is a
collection of pushdown automata that work in a sequential way
according to some protocols. The strategies considered are similar to
those defined for cooperating distributed grammar systems
\cite{1}. The power of this model with respect to various protocols
has been proved to be equivalent to that of a Turing machine.

There is a similar definition for finite state automata called as
parallel communicating finite automata systems that are finite
collections of finite state automata working independently but
communicating their states to each other by request. The notion was
introduced in \cite{8} by C. Martin-Vide et al. In this paper we
concentrate only on parallel communicating pushdown automata systems.

There are four variants of parallel communicating pushdown automata
systems defined in \cite{3},\linebreak namely,
\begin{enumerate}
\item non-centralized non-returning PCPA, denoted as $\PCPA$,
\item non-centralized returning PCPA, denoted by $\RPCPA$,
\item centralized non-returning PCPA, denoted by $CPCPA$,
\item centralized returning PCPA, denoted by $\RCPCPA$.
\end{enumerate}
In \cite{3}, it has been shown that both the non-centralized variants
are universally complete by simulating a two stack machine. Moreover,
the centralized versions of PCPA were shown to have at least the power
of a multi-head and multi-stack pushdown automata and the exact power
was left as an open problem. In~\cite{13}, it was shown that the centralized
non-returning PCPA is also universally complete. The acceptance power
of the last variant mentioned in the list -- centralized returning
PCPA was the only variant that was left open. First, it was felt that
since it has the flexibility of having more than one stacks it could
be again computationally complete. In \cite{13}, it was conjectured
that reversing the stack in the case of $\RCPCPA$ is impossible. In
this paper we present an interesting result that $\RCPCPA$ can be
simulated by a multi-head pushdown automata. This result together with
the result of \cite{3} that $\RCPCPA$ has at least the acceptance power
of multi-head pushdown automata shows that they are both equivalent
with respect to their acceptance power. This result is interesting in
the sense that it is the only variant that is not equivalent to Turing
machine when compared to other variants in parallel communicating
pushdown automata and its sequential counterpart -- distributed pushdown
automata systems \cite{6}.

In this paper, first we define a property called {\em known
  communication property} for $\PCPA$. In $\PCPA$ systems, the
communication takes place via stacks of the components. As mentioned
earlier, the querying (requesting) component places a designated query
symbol corresponding to the component from which it seeks
communication at the top of its stack and the component that was
requested transfers the contents of the stack to the requesting
component. At the state level of the communicating component it does
not know when the communication occurs and to which component the
communication occurs (if they are non-centralized\footnote{For
  centralized systems, trivially, it is the master component that
  requests for communication.}). So, by this property we force the
communicating component to know when the communication occurs at its
state level. This is done by imparting a switching element inside the
state of the system that becomes {\em one} if the communication
occurs, else it will be {\em zero}. This property in $\RCPCPA$ is used
to get the result that a $\RCPCPA$ can be simulated by a multi-head
pushdown automata. This result, in turn, give rise to a new class in
$\RPCPA$ called as $\simpleRPCPA$ that can be equivalently written as
finite intersection of multi-head pushdown automata systems.

The paper is organized in the following way. Section~\ref{section2}
consists of some preliminaries -- definition of parallel communicating
pushdown automata systems and multi-head pushdown automata. In
Section~\ref{section3}, we present a property of PCPA called as known
communication and in the following section we prove that $\RCPCPA$ can
be simulated by a multi-head pushdown automata thereby showing that
the class of languages accepted by $\RCPCPA$ is contained in the class
of languages accepted by multi-head pushdown
automata. Section~\ref{section5} defines a restrictive class of
$\RPCPA$ called as $\simpleRPCPA$ and show that the language accepted
by it can be written as a finite intersection of languages accepted by
multi-head pushdown automata. The paper concludes with some remarks in
Section~\ref{section6}.

\section{Background}

\label{section2}
We assume that the reader is familiar with the basic concepts of
formal language and automata theory, particularly the notions of
grammars, grammar systems and pushdown automata. More details can be
found in~\cite{1,11,12}.

An alphabet is a finite set of symbols. The set of all words over
an alphabet~$V$ is denoted by $V^*$. The empty word is written as
$\epsilon$ and $V^+=V^*-\{\epsilon\}$. For a finite set $A$, we denote by
$\card{A}$ the cardinality of $A$; for a word $w\in V^*$,  $\card{w}_A$
denotes the number of symbols in $w$ which are from the set
$A\subseteq V$.  

We first define parallel communicating pushdown automata systems as
given in \cite{3}. 

\begin{definition}
\label{defn1}
A parallel communicating pushdown automata system of degree $n$ ($n\geq 1$)
is a construct $${\mathcal A}=(V, \Delta, A_1, A_2, \dots , A_n,
K)$$ where $V$ is the input alphabet, $\Delta$ is the alphabet of
pushdown symbols, for each $1\leq i\leq n$,
$$A_i=(Q_i,V,\Delta,f_i,q_i,Z_i,F_i)$$ is a pushdown automaton with the
set of states $Q_i$, the initial state $q_i\in Q_i,$ the alphabet of
input symbols~$V,$ the alphabet of pushdown symbols $\Delta,$ the
initial pushdown symbols $Z_i\in \Delta,$ the set of
final states $F_i\subseteq Q_i,$ transition mapping $f_i$
from $Q_i\times (V\cup\{\epsilon\})\times \Delta$ into the finite
subsets of ${Q_i\times \Delta^*}$ and $K \subseteq\{K_1,K_2,\dots
,K_n\}\subseteq \Delta$ is the set of query symbols. 

The automata $A_1,A_2,\dots,A_n$ are called the components of the
system~${\mathcal A}$. 
\end{definition}

If there exists only one $i$, $1\le i\le n$, such that for $A_i$, 
$(r,\alpha)\in f_i(q,a,A)$ with $\alpha\in \Delta^*, \ |\alpha|_K>0$
for some $r,q\in Q_i$, $a\in V\cup\{\epsilon\}$, $A\in \Delta$, then
the system is said  to be {\it centralized} and $A_i$ is said to be
the {\it master} of the system, i.\,e., only one of the component, called
the the master, is allowed to introduce queries. For the sake of
simplicity, whenever a system is centralized its master is taken to be
the first component unless otherwise mentioned. 

The configuration of a parallel communicating pushdown
automata system is defined as a $3n$-tuple
$(s_1,x_1,\alpha_1,s_2,x_2,\alpha_2,\dots ,s_n,x_n,\alpha_n)$
where for $1\leq i\leq n$, $s_i\in Q_i $ is the current state of the
component $A_i$, $x_i\in V^*$ is the remaining part of the input word
which has not yet been read by $A_i$, $\alpha_i\in \Delta^*$ is the
contents of the $i^{th}$ stack, its first letter being the topmost symbol.

The initial configuration of a parallel communicating pushdown
automata system is defined as\linebreak
$(q_1,x,Z_1,q_2,x,Z_2, \ldots, \ldots, q_n,x,Z_n)$
where $q_i$ is the initial state of the component $i$, $x$ is the
input word, and $Z_i$ is the initial stack symbol of the component
$i$, $1 \leq i \leq n$. It should be noted here that all the
components receive the same input word $x$. 

There are two variants of transition relations on the set of
configurations of~${\mathcal A}$. They are defined in the following
way:
\begin{enumerate}
\item $(s_1,x_1,B_1\alpha_1,\dots ,s_n,x_n,B_n\alpha_n)
\vdash (p_1,y_1,\beta_1,\dots ,p_n,y_n, \beta_n),$\\ where $B_i\in
\Delta, \alpha_i,\beta_i\in \Delta^*,$ $1\leq i\leq n,$ iff one of the
following two conditions hold:
\begin{enumerate}
\item[(i)] $K\cap \{B_1,B_2,\dots,B_n\}=\emptyset\mbox{ and
}x_i=a_iy_i,a_i\in V\cup\{\epsilon\}$,\\ $(p_i,\beta_i')\in
f_i(s_i,a_i,B_i),$ $\beta_i=\beta_i'\alpha_i,\ 1\le i\le n,$
\item[(ii)]  
(a) $\mbox{for all } i, 1\le i\le n \mbox{ such that }
B_i=K_{j_i}\mbox{ and } B_{j_i}\!\notin\! K, \
\beta_i=B_{j_i}\alpha_{j_i}\alpha_i,$  \\
(b) $\mbox{for all other } r, \ 1\le r\le n, \ \beta_r=B_r\alpha_r,
\mbox{ and }$ \\
(c) $y_t=x_t,\ p_t=s_t, \ \mbox{for all} \ t,\ 1\le t\le n.$
\end{enumerate}
\item $(s_1,x_1,B_1\alpha_1,\dots ,s_n,x_n,B_n\alpha_n)
\vdash_r (p_1,y_1,\beta_1,\dots ,p_n,y_n,
\beta_n),$\\
where $B_i\in \Delta, \alpha_i,\beta_i\in \Delta^*,$ $1\leq
i\leq n,$ iff one of the following two conditions hold:
\begin{enumerate}
\item[(i)] $K\cap \{B_1,B_2,\dots,B_n\}=\emptyset\mbox{ and
}x_i=a_iy_i,a_i\in V\cup\{\epsilon\}$,\\ 
$(p_i,\beta_i')\in
f_i(s_i,a_i,B_i), \beta_i=\beta_i'\alpha_i,\ 1\le i\le n,$ 
\item[(ii)] (a) $\mbox{for all } 1\le i\le n \mbox{ such that } 
B_i=K_{j_i}\mbox{ and } B_{j_i}\notin K$, \\  
${}\qquad\beta_i=B_{j_i}\alpha_{j_i}\alpha_i,\mbox{ and
}\beta_{j_i}=Z_{j_i},$\\
(b) $\mbox{for all the other } r, 1\le r\le n,
\beta_r=B_r\alpha_r, \mbox{ and }$\\
(c) $y_t=x_t,\ p_t=s_t, \mbox{for all} \ t,\ 1\le
t\le n$. 
\end{enumerate}
\end{enumerate}
The communication between the components has more priority than the
usual transitions in individual components. So, whenever a component
has a query symbol in the top of its stack it has to be satisfied by
the requested component before proceeding to the usual transitions.

The top of each communicated stack must be a non-query symbol before
the contents of the stack can be sent to another component. If the
topmost symbol of the queried stack is also a query symbol, then first
this query symbol must be replaced with the contents of the
corresponding stack. If a circular query appears, then the working of the
automata system is blocked.

After communication, the stack contents of the sender is retained in
the case of relation $\vdash$, whereas in the case of $\vdash_r$ it
looses all the symbols and the initial stack symbol is inserted into
the respective stack that communicated.

A parallel communicating pushdown automata system whose computations are
based on relation $\vdash$ is said to be {\it non-returning}; if its
computations are based on relation $\vdash_r$ it is said to be {\it
returning}.  

The language accepted by a parallel communicating pushdown automata
system, ${\mathcal A}$ is defined as
{\small\begin{gather*}
L({\mathcal A})=\{x\in V^*\!\mid(q_1,x,Z_1,\dots ,q_n,x,Z_n)\vdash^*
  (s_1,\epsilon,\alpha_1,\dots ,s_n,\epsilon,\alpha_n), s_i\in
  F_i,1\le i\le n\},\\
L_r({\mathcal A})=\{x\in V^*\!\mid(q_1,x,Z_1,\dots ,q_n,x,Z_n)\vdash_r^*
  (s_1,\epsilon,\alpha_1,\dots ,s_n,\epsilon,\alpha_n), s_i\in
  F_i,1\le i\le n\}
\end{gather*}}%
where $\vdash^*$ and $\vdash_r^*$ denote the reflexive and transitive
closure of $\vdash$ and $\vdash_r$ respectively.

We use these notations: $\RCPCPA(n)$ for returning
centralized parallel communicating pushdown automata systems of degree
at most $n$, $\RPCPA(n)$ for returning non-centralized parallel
communicating pushdown automata systems of degree at most $n$,
$CPCPA(n)$ for centralized parallel communicating pushdown automata
systems of degree at most $n$, and $\PCPA(n)$ for parallel communicating
pushdown automata systems of degree at most $n$. 

If $X(n)$ is a type of automata system, then ${\mathcal L}(X(n))$ is
the class of languages accepted by pushdown automata systems of type
$X(n)$. For example, ${\mathcal L}(\RCPCPA(n))$ is the class of languages
accepted by automata of the type $\RCPCPA(n)$ (returning centralized
parallel communicating pushdown automata systems of degree at most
$n$). Likewise, ${\mathcal L}(\RCPCPA)$ denotes the class of languages
accepted by automata of type $\RCPCPA(n)$ where $n$ is arbitrary.

\section{Known communication}
\label{section3}
In this section we define known communication property for parallel
communicating pushdown automata systems. Informally speaking, a PCPA
with known communication property is a PCPA wherein each component
that has communicated knows about whether or not it has communicated
its stack symbols in the previous step. We define the known
communication property more formally in the following.

\begin{definition}
\label{known-communication}
Let $\mathcal A$ be a parallel communicating pushdown automata system
of degree $n$ given by \hbox{${\mathcal A}=(V, \Delta, A_1, A_2, \dots ,
A_n, K)$} where for each $1\leq i\leq n$,
$A_i=(Q'_i,V,\Delta,f_i,q_i,Z_i,F_i)$. Let $Q'_i$ for each $i$ be
represented as $Q_i\times\{0,1\}$ where $Q_i$ is as in the definition
of a PCPA (Definition~\ref{defn1}). Then $\mathcal A$ is said to
follow known communication property if at a communication step
component~$j$ has communicated to component $i$ with the state of
those systems as $(q_j,0)$ and~$(q_i,0)$ then in the the next usual
transition of the component the state of the component $j$ will be of
the form $(q_j,1)$. And, it will not reach the state of the form
$(q_j,1)$ in any other case.
\end{definition}

The set $\{0,1\}$ acts as a switch for each component $j$, i.\,e., it
becomes $1$ when the component has communicated in the previous step,
or otherwise it is $0$.

First, we prove that every $\RCPCPA$ can be rewritten as a $\RCPCPA$ with
known communication property.

\begin{theorem}
\label{thm1}
For every $\RCPCPA$ ${\mathcal A}$ there exists an equivalent $\RCPCPA$
${\mathcal A'}$ that satisfies the known communication property.
\end{theorem}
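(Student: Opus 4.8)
The plan is to build $\mathcal{A}'$ from $\mathcal{A}$ by attaching a single control bit to every state while leaving the communication structure (the master, the query symbols, the returning semantics) untouched, so that $\mathcal{A}'$ is again an $\RCPCPA$ accepting the same language. The guiding observation is that in a centralized returning system only the non-master components ever \emph{send}: a component $A_j$ communicates precisely when the master puts $K_j$ on top of its stack, and under the returning relation the effect on $A_j$ is that its entire stack is replaced by the single initial symbol $Z_j$. Hence, for a non-master component, ``I have just communicated'' is equivalent to ``my stack was just reset to $Z_j$,'' and this is something $A_j$ can read off its own stack. The master never sends, so its bit can be held at $0$ throughout; this already secures the requester side $(q_i,0)$ demanded by Definition~\ref{known-communication}.

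First I would normalize the components. For each $j$ I assume (renaming if necessary) that the initial state $q_j$ has no incoming transitions, and that $Z_j$ occurs on every reachable stack only as the bottommost symbol, is never pushed, and is never removed by a transition reading it. This is the standard bottom-of-stack normal form, and it is preserved by the returning semantics because a reset merely reinstates this bottom symbol. I then introduce, for each non-master $j$, a fresh symbol $S_j$ that will play the role of $Z_j$ during ordinary computation: in $\mathcal{A}'$ the genuine $Z_j$ remains the declared initial symbol (so resets still produce $Z_j$), but on its very first move each component performs the original initial transition while rewriting the bottom $Z_j$ into $S_j$, and thereafter simulates $A_j$ with $S_j$ as the persistent bottom. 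By the normal form, the symbol $Z_j$ then never reappears above $S_j$ during an ordinary run, so the only way $A_j$ can ever read $Z_j$ again is immediately after a reset.

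The bit is maintained as follows. Since all stacks initially show the non-query symbols $Z_i$, the first computation step is necessarily an ordinary (synchronous) transition, so every component leaves its initial state before any communication can occur; reading $Z_j$ while still in $q_j$ therefore unambiguously marks the start and is handled with bit $0$, whereas reading $Z_j$ in any other state marks a reset and is the point at which the bit is set to $1$. Concretely, a transition of $A'_j$ that reads $Z_j$ in a non-initial state mimics the original post-reset move of $A_j$ on $Z_j$, installs $S_j$ as the new bottom, and switches the bit to $1$; every other transition of $A'_j$ faithfully copies the corresponding move of $A_j$ and resets the bit to $0$. Thus the bit of a non-master component is $1$ in exactly the configurations reached by the first ordinary step after a communication in which that component sent its stack, and $0$ otherwise, as Definition~\ref{known-communication} requires. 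To keep the master correct despite the relabelled bottom, I extend the master's transition function so that each received marker $S_j$ is treated exactly as the original bottom $Z_j$ of component $j$; since the $S_j$ are pairwise distinct and distinct from the master's own $Z_1$, no interference arises. Taking $F'_j=F_j\times\{0,1\}$ for the non-master components and $F'_1=F_1\times\{0\}$ for the master, an inductive configuration-by-configuration correspondence shows that computations of $\mathcal{A}'$ project onto computations of $\mathcal{A}$ and conversely, so $L_r(\mathcal{A}')=L_r(\mathcal{A})$.

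The main obstacle is exactly the detection step: a communication transition changes no component's state, and a component has no direct access to the master's actions, so at first sight a sent component cannot tell that it has communicated. The construction circumvents this by exploiting the footprint that the \emph{returning} mode leaves on the sender—the reset to $Z_j$—and by arranging, through the bottom marker $S_j$ together with the fresh non-reentrant initial state, that seeing $Z_j$ during a run can mean nothing other than ``a reset has just occurred.'' The rest is routine bookkeeping; the only delicate point is that the master may pull from the same component on consecutive steps, in which case the bit simply stays $1$ across the repeated pulls, each occurrence still being immediately preceded by a communication in which that component sent, so the intended characterization of the bit is maintained.
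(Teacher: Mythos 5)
Your proposal is correct, and it rests on exactly the key idea of the paper's proof: in returning mode a component that sends has its stack reset to its genuine initial symbol $Z_j$, so by installing a duplicate bottom marker that replaces $Z_j$ during ordinary computation (your $S_j$, the paper's $Z'_j$), a non-master component that reads $Z_j$ after its first step knows unambiguously that it has just communicated, and that is the one place where the bit is set to $1$. Where the two constructions part ways is in the bookkeeping forced by the marker being shipped to the master together with the rest of the sender's stack. The paper resolves this by breaking every transition of ${\mathcal A}$ into two transitions of ${\mathcal A'}$ (a padding $\epsilon$-move followed by the simulating move), so that the master can use its $\epsilon$-slot to pop a transferred $Z'_j$ without falling out of lockstep with the other components; you instead keep a strict one-step-to-one-step correspondence by extending the master's transition function to read $S_j$ exactly as it would read $Z_j$ of component $j$. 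Your variant is arguably cleaner: it avoids doubling the step count and the two-phase state structure, and it makes the inductive configuration correspondence immediate; the price is the preliminary bottom-of-stack normal form, which is legitimate here precisely because in a centralized system non-master components never receive transfers (the same fact that lets you pin the master's bit at $0$). Your explicit treatment of the points the paper leaves implicit --- the master never sends so its bit stays $0$, the initial configuration forces the first step to be an ordinary one, and consecutive pulls from the same component simply re-trigger the bit --- is sound.
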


\begin{proof} 
Let ${\mathcal A}=(V, \Delta, A_1, A_2, \dots ,
A_n, K)$ where for each $i$, $1\leq i\leq n$,
\hbox{$A_i=(Q_i,V,\Delta,f_i,q_i,Z_i,F_i)$}. We will construct an equivalent
$\RCPCPA$, ${\mathcal A'}$ that satisfies the known communication
property. We construct~${\mathcal A'}$ from ${\mathcal A}$ as
described below. Take ${\mathcal
  A'}=(V,\Delta',A'_1,A'_2,\ldots,A'_n,K)$ with
$$A'_i=(Q'_i,V,\Delta',f'_i,(q_i,0),Z_i,F'_i)$$ where
\begin{enumerate}
\item $\Delta'=\Delta \cup \{Z'_i\}$, $1\leq i \leq n$,
\item $Q'_i=Q^{(j)}_i \times \{0,1\} \cup \{(q_i,0),(q'^{(j)}_i,0)\}$
  with $j=\{1,2\}$ and $1\leq i \leq n$, and
\item $F'_i=F^{(j)}_i\times \{0,1\}$ with $j=\{1,2\}$ and $1\leq i
  \leq n$.
\end{enumerate}
The transition function $f'_i (1\leq i\leq n)$ is defined as follows:
\begin{enumerate}
\item $f'_i((q_i,0),\lambda,Z_i)=\{((q'^{(1)}_i,0),Z'_i)\}$,
\item $f'_i((q'^{(1)}_i,0),\lambda,Z'_i)=\{((q'^{(2)}_i,0),Z'_i)\}$,
\item $f'_i((q'^{(2)}_i,0),a,Z'_i)$ includes
  $\{((p^{(1)}_i,0),\alpha)\}$ where $(p_i,\alpha) \in
  f_i(q_i,a,Z_i)$ and $\alpha \in \Delta'^*$,
\item $f'_i((p^{(1)}_i,0),\lambda,X)=\{((p^{(2)}_i,0),X)\}$,
\item $f'_i((p^{(2)}_i,0),a,X)$ includes $\{((r^{(1)}_i,0),\alpha)\}$
  where $(r_i,\alpha) \in f_i(p_i,a,X)$ and $\alpha \in \Delta'^*$,
\item $f'_i((p_i^{(1)},0),\lambda,Z_i)=\{((p^{(2)}_i,1),Z'_i)\}$,
\item $f'_i((p^{(2)}_i,1),a,X)$ includes $\{((r^{(1)}_i,0),\alpha)\}$
  where $(r_i,\alpha) \in f_i(p_i,a,X)$ and $\alpha \in \Delta'^*$,
\item $f'_1((p^{(1)}_1,0),\lambda,Z'_j)=\{((p^{(2)}_1,0),\lambda)\}$
  where $j>1$.
\end{enumerate}
For the above, without loss of generality we assume that there are no
transitions of the form\linebreak $(q'_i,\lambda)\in f_i(q_i,a,Z_i)$. If there
were any such transitions we can replace it with $(q'_i,Z_i)\in
f_i(q_i,a,Z_i)$.

The main idea of the above construction is explained in the
sequel. Whenever the master component requests for communication from,
say, $j^{th}(1<j\leq n)$ component the $j^{th}$ component communicates
all the symbols from its stack and thereby it loses all its stack
contents. And so, it again starts its processing with the start stack
symbol $Z_j$ (since it is following the returning mode). Hence when the
communicating component suddenly sees a $Z_j$ in its stack, it might
be because of a communication step or it can even be a normal step
where simply by a sequence of erasing transitions it came to the start
stack symbol. Now to construct the new machine ${\mathcal A}'$ we
rewrite the transition in such a way that we put duplicate start stack
symbols $Z'_j$ for each component and make sure that the normal
transitions do not go beyond $Z'_j$ in the stack (this is done by
first two transitions). So, if this is done it will make sure that when
the $j^{th}$ component transition sees $Z_j$, it means that it has
encountered a communication step and hence it can store the
information in its state by making the switch of its state as {\em
  one}. When imparting duplicate start stack symbols $Z_j'$ for each
component $j$ and when this same symbol is communicated to the master
component together with other symbols the synchronization might get
affected. Hence to keep the synchronization intact, we break each
transition in ${\mathcal A}$ into two transitions in ${\mathcal A}'$
-- one, as a simple $\epsilon$ move and the other one as the usual
transition that simulates the transition of ${\mathcal A}$. So when
the master component encounters duplicate start stack symbols, it will
simply pop the symbol from the stack in the first $\epsilon$
transition and then continues with the usual transition from then
onwards. The breaking of each transition into two transition
facilitates to keep the synchronization intact. From the construction
it should be easy to observe that the switch of the state
corresponding to the communicating component becomes $1$ only when the
previous step was a communication step.

Hence it should be clear that both the machines ${\mathcal A}$ and
${\mathcal A}'$ accept the same language.
\end{proof}

Now, naturally, the following question arises: does the above
construction holds good for other variants of parallel communicating
pushdown automata systems, namely, non-centralized non-returning,
centralized non-returning and non-centralized returning. In case of
non-returning variants it will not hold because after the
communication takes place the respective communicating stacks retain
the symbols and so there is no way of checking if it has been
communicated or not. At this juncture we do not know if non-returning
versions have known communication property or not. But in the case of
non-centralized returning parallel communicating pushdown automata
systems it holds good. Hence, we have the following theorem:

\begin{theorem}
\label{thm2}
For every $\RPCPA$ ${\mathcal A}$, there exists an equivalent $\RPCPA$
${\mathcal A'}$ that satisfies the known communication property.
\end{theorem}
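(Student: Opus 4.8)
The plan is to mirror the construction from the proof of Theorem~\ref{thm1}, adapting it to the non-centralized returning setting. The essential mechanism used for $\RCPCPA$ is that, in the returning mode, a component that has just communicated loses its entire stack and is reset to its initial stack symbol $Z_j$. By inserting a duplicate start symbol $Z'_j$ and forbidding the ordinary (non-communicating) transitions from ever popping below $Z'_j$, we guarantee that the component sees the bare symbol $Z_j$ \emph{only} immediately after a communication step, and it can therefore flip its switch to $1$ at that moment. This reasoning is entirely local to each component and to the returning semantics, so it transfers verbatim to the non-centralized case. First I would set up ${\mathcal A}'=(V,\Delta',A'_1,\ldots,A'_n,K)$ exactly as before, with $\Delta'=\Delta\cup\{Z'_i\mid 1\le i\le n\}$, augmented state sets $Q'_i$ carrying the $\{0,1\}$ switch and the auxiliary primed/superscripted states, and final-state sets $F'_i=F^{(j)}_i\times\{0,1\}$.

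Next I would reuse transition rules (1)--(7) of the Theorem~\ref{thm1} construction unchanged, since none of them refers to the master: rules (1)--(2) install $Z'_i$ beneath $Z_i$ and pin the ordinary computation above $Z'_i$; rules (3)--(5) split each original move into an $\epsilon$-move followed by the simulating move, keeping all $n$ components synchronized in lockstep; and rules (6)--(7) detect the exposed $Z_i$ after a return and set the switch to $1$. The one rule that must be generalized is rule (8), which in the centralized proof applied only to the master $A'_1$ popping a communicated duplicate symbol $Z'_j$. In the non-centralized setting \emph{any} component may issue a query and thereby receive another component's stack (now possibly carrying a $Z'_j$ on top), so I would replace (8) by the family
\begin{equation*}
f'_i((p^{(1)}_i,0),\lambda,Z'_j)=\{((p^{(2)}_i,0),\lambda)\}\quad\text{for all }i,j\text{ with }i\neq j,
\end{equation*}
allowing every component to discard a received duplicate start symbol in its preliminary $\epsilon$-move. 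Because the transition split into an $\epsilon$-move plus a simulating move is preserved, the global synchronization that makes the duplicate symbols harmless is maintained regardless of which component is querying.

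The hard part will be verifying that synchronization is genuinely preserved now that communication is no longer funneled through a single master. In the centralized system at most one component queries per step and all received duplicate symbols end up in one stack, so a single extra $\epsilon$-move per communication suffices to absorb them; in the non-centralized system several components may query simultaneously, and a queried stack may itself carry the duplicate symbol $Z'_j$ at various depths, so I must check that the erasing $\epsilon$-moves at different components do not drift out of phase with one another or with the non-querying components. The argument I would give is that every component, querying or not, executes exactly the same two-phase pattern ($\epsilon$-move then simulating move) on each original step, and the communication priority of $\vdash_r$ forces all pending queries to be resolved before any simulating move fires; hence all components remain aligned step-for-step, and the presence of an extra $Z'_j$ only ever costs the receiving component one of its already-scheduled $\epsilon$-moves.

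Finally I would conclude, as in Theorem~\ref{thm1}, that the switch of component $j$ becomes $1$ precisely after a communication step and in no other case, so ${\mathcal A}'$ satisfies the known communication property, and that $L_r({\mathcal A}')=L_r({\mathcal A})$ because the two-phase simulation reads the same input symbols in the same order and reaches a configuration with every component in a final state exactly when ${\mathcal A}$ does. This establishes the theorem.
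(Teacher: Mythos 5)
Your overall route is the same one the paper takes: the paper gives no separate proof of this theorem, merely asserting that the construction of Theorem~\ref{thm1} carries over to the returning non-centralized case, and your plan (reuse rules (1)--(7), generalize rule (8) from the master to arbitrary querying components) is the natural way to make that assertion concrete. However, your generalized rule family, with its restriction $i\neq j$, has a genuine gap, and it arises from exactly the feature that distinguishes $\RPCPA$ from $\RCPCPA$: queries can flow in both directions. Suppose $A_1$ queries $A_2$; then $A_1$'s stack acquires an interior occurrence of $A_2$'s duplicate symbol $Z'_2$. Suppose that later, before $A_1$ has popped down past that occurrence, $A_2$ queries $A_1$ (nothing in the definition of $\RPCPA$ forbids this; it is the fourth scenario listed at the start of Section~\ref{section5}). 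After that communication, $A_2$'s stack contains an interior occurrence of its \emph{own} marker $Z'_2$. When $A_2$ eventually pops down to it, it cannot get past it: your popping rule $f'_i((p^{(1)}_i,0),\lambda,Z'_j)=\{((p^{(2)}_i,0),\lambda)\}$ excludes $i=j$; rule (6) fires only on the unprimed $Z_2$; and the phase-2 simulating rules (5)/(7) invoke $f_2$, which is undefined on $Z'_2$. Since a usual step of a PCPA requires every component to move, the whole system blocks, so ${\mathcal A}'$ can fail to accept words that ${\mathcal A}$ accepts and language equivalence breaks. This configuration can never occur in Theorem~\ref{thm1}, because there the master is never queried and hence the markers it receives never travel back to their owners --- which is precisely why the one-sided rule (8) was safe in the centralized case but your one-sided generalization is not safe here.

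The repair is not deep --- allow each component to absorb \emph{any} returned duplicate symbol, its own included, and argue that this extra rule never interferes with the genuine bottom marker (a returned copy sits strictly above it, and the branch that would wrongly erase the bottom marker just dies, which is harmless under nondeterministic acceptance) --- but your synchronization argument as written cannot detect the problem, because it tracks only the two-phase alignment of steps across components, not the identity of the duplicate symbols that can come back to their originating component through chains of mutual queries. Since handling arbitrary query patterns is the entire content of Theorem~\ref{thm2} beyond Theorem~\ref{thm1}, this case needs to be addressed explicitly.
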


In the above construction the main thing to note here is, in the case
of $\RCPCPA$ it is intrinsically known that to which component the
communication has taken place whereas in the case of $\RPCPA$ it will
not be possible to track which component received the
communication -- the reason being that there may be two or more
components that can request for communication at the same time. 
 
\section{Power of RCPCPA}
\label{section4}

In this section, we use the known communication property to show that
the acceptance power of $\RCPCPA$ is equivalent to that of multi-head
pushdown automata.

\begin{theorem}
\label{thm3}
For every $\RCPCPA$ ${\mathcal A}$, there exists a multi-head pushdown
automata $M$ such that\linebreak $L({\mathcal A})=L(M)$.
\end{theorem}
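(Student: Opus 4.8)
The plan is to invoke Theorem~\ref{thm1} first, so that we may assume $\mathcal{A}$ already satisfies the known communication property; this is exactly what makes every component aware, through the switch bit carried in its state, of which steps are communication steps and hence when its stack has just been harvested. I would then build a multi-head pushdown automaton $M$ having one reading head per component of $\mathcal{A}$ (so $n$ heads scanning $n$ virtual copies of the same input $x$), a single stack that at all times holds the \emph{master's} stack, and a finite control recording the current state and switch bit of every component. The governing observation is that, because $\mathcal{A}$ is centralized and returning, only the master's stack ever has to persist: a non-master component $j$ can lose its stack only by being queried, at which instant its whole content is prepended to the master's stack and $j$ restarts from $Z_j$. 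Consequently the non-master stacks never have to be stored \emph{simultaneously} (which a single stack could not do); they only have to be produced, one at a time, at the moment they are harvested.

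Concretely, $M$ simulates the master's ordinary moves directly on its single stack with head $1$, and simulates each non-master component $j$ \emph{lazily}: it keeps only $j$'s state in the finite control and defers all of $j$'s stack manipulation until the master actually queries $j$. At such a query, rather than pushing a query symbol, $M$ reconstructs the epoch of $j$ that has just ended: it restarts $j$ from the stored state and a fresh $Z_j$, guesses $j$'s move sequence, advances head $j$ over the input it consumes, and performs the induced pushes and pops \emph{directly on the shared stack}. Since the harvested string is prepended to the master's stack, what $j$ leaves on top of the shared stack is precisely the master's post-communication stack, so no separate storage is needed; $M$ then records $j$'s new state and resumes the master. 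Because the master queries $j$ in increasing order of time and $M$ reconstructs $j$'s epochs in that same order, head $j$ advances monotonically and $j$'s state is correctly carried from one epoch to the next -- the bookkeeping that the known communication property legitimizes. Acceptance is handled in a final phase: the master having consumed all of $x$ in a final state, $M$ runs each non-master's remaining, never-harvested epoch on the single stack, checks that it reaches a final state after reading the rest of its input, and clears the stack before the next component, which is sound since final stack contents are irrelevant to acceptance.

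The step I expect to be the main obstacle is \emph{synchronization}. In $\mathcal{A}$ every ordinary step advances all components at once and every communication step freezes them all, so between any two events all components perform the \emph{same} number of ordinary moves; the content harvested at a query is fixed by this common clock, not merely by how much input $j$ has read. Thus $M$ must reconstruct each epoch of $j$ for exactly the number of ordinary steps the master spent since $j$'s previous reset. Matching head positions alone does not suffice, since $\epsilon$-moves let two components sit at different input positions after the same number of steps, and the step count itself is unbounded and so cannot live in the finite control. The heart of the argument is therefore to force $M$ to respect this common step count using only its heads (and, if needed, auxiliary clock heads) together with the switch bits -- for instance by first putting $\mathcal{A}$ into a normal form that ties the number of ordinary steps to head position, turning the required timing into a position comparison a multi-head automaton can test. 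Establishing such a normal form, or otherwise certifying that each lazily reconstructed epoch corresponds to a genuine synchronous computation, is where the real work lies; once synchronization is secured, soundness and completeness follow by matching the runs of $\mathcal{A}$ and $M$ step for step, giving $L(\mathcal{A})=L(M)$.
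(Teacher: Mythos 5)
Your construction coincides, in every concrete respect, with the paper's own proof: invoke Theorem~\ref{thm1} so that $\mathcal{A}$ may be assumed to satisfy the known communication property, build an $n$-head pushdown automaton whose single stack carries the master's stack, keep every component's state (with its switch bit) in the finite control, simulate the master directly, lazily reconstruct component $j$'s epoch in place on the shared stack with head $j$ whenever the master's stack top becomes $Q_j$, and finish with a verification phase that runs each non-master component to the end of the input and checks final states. The only divergence is the point you flag: the paper asserts that the lazy simulation of $j$ stops ``when the switch becomes one,'' while you observe that knowing when to stop is exactly the synchronization problem, and you leave that step open.

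That gap is real, and --- importantly --- it is a gap in the paper's proof as well, not something the paper resolves. In the known-communication system of Theorem~\ref{thm1}, a component reaches a switch-$1$ state only through the transition that fires on stack top $Z_j$, and $Z_j$ surfaces only because a communication step has just reset $j$'s stack. In the lazy simulation no reset ever happens: the $Z_j$ that $M$ pushes upon seeing $Q_j$ sits at the start of the epoch, so the switch can turn to $1$ only immediately (handing the master an empty epoch) or never; the reset that would end the epoch is the very query now being processed. Consequently the paper's termination rule cannot enforce that $j$'s reconstructed epoch runs for exactly the number of steps the global lock-step clock dictates, and without that constraint $L(M)\subseteq L(\mathcal{A})$ fails: for instance, if $A_2$ pushes one symbol per step while reading the input, and the master, after reading $a^n$, queries $A_2$ and then matches the received stack height against the number of $b$'s, the synchronous system accepts only $a^nb^{n+O(1)}$, whereas a simulation free to cut $A_2$'s epoch at any length accepts $a^nb^m$ for every $m$. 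So the normal form you call for (tying step count to head position, or otherwise certifying that each reconstructed epoch corresponds to a genuine synchronous run) is not a refinement of the paper's argument --- it is the missing core of it, and neither your proposal nor the paper supplies it.
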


\begin{proof} Let ${\mathcal A}=(V, \Delta, A_1, A_2, \dots ,
A_n, K)$ where for each $1\leq i\leq n$,
$$A_i=(Q_i,V,\Delta,f_i,q_i,Z_i,F_i).$$ Without loss of generality we
assume that ${\mathcal A}$ satisfies known communication property. We
will construct an equivalent $n$-head pushdown automata $M$. In the
following construction just for our convenience sake and to reduce the
clumsiness in our construction we denote the switch of the component
$j$ when in state~$r$ by $switch_j(r)$. Hence the switch is not
included in the state itself as in Theorem~\ref{thm1} where the states
were of the form $(r,0)$ or $(r,1)$. We also assume without loss of
generality that there is no transition of the form $(p,\lambda) \in
\delta(q,a,Z_i)$. Let $M=(n,Q,V',\Delta',f,q_0,Z_0,F)$ where
\begin{enumerate}
\item $n$ denotes the number of heads,
\item $Q=\{[p_1,p_2,\ldots,p_n,i] \mid p_j\in Q_j,1\leq i,j \leq n\}$,
\item $V'=V$,
\item $\Delta'=\Delta$,
\item $q_0=[p_1,p_2,\ldots,p_n,1]$,
\item $Z_0=Z_1$,
\item $F=\{[End_1,End_2,\ldots,End_n,n]\}$.
\end{enumerate}

Generally, in a $\RCPCPA$ system there will be two sets of components
-- one, that communicates to the master and the other one that do not
communicate, or rather that is not requested by the component for
communication. First, we will define transitions that takes care of the
former set. 

The transition function is defined as follows:
\begin{enumerate}
\item  $f([p_1,p_2,\ldots,p_n,1],a_1,a_2,\ldots,a_n,Z_1)$ includes\\
  $([q,p_2,\ldots,p_n,1],1,0,\ldots,0,\alpha)$ if $(q,\alpha) \in
  f_1(p_1,a_1,Z_1)$, 
\item $f([p,p_2,\ldots,p_n,1],a,a_2,\ldots,a_n,X)$ where $X \neq Q_j
  (1<j\leq n)$ includes\\ $([q,p_2,\ldots,p_n,1],1,0,\ldots,0,\alpha)$
  if $(q,\alpha) \in f_1(p,a,X)$,
\item $f([p,p_2,\ldots,p_n,1],a,a_2,\ldots,a_n,Q_j) =
  ([p,p_2,\ldots,p_n,0,0,\ldots,0,j],Z_j)$ where $1<j\leq n$,
\item
  $f([p,p_2,\ldots,p_j,\ldots,p_n,j],a_1,a_2,\ldots,a_j,\ldots,a_n,Z_j)$
  includes\\
  $([p_1,p_2,\ldots,r,\ldots,p_n,j],0,0,\ldots,j,\ldots,j,\alpha)$ if
  $(r,\alpha) \in f_j(p_j,a,Z_j)$ where $1<j\leq n$,
\item For $switch_j(r)=0$, $f([p_1,p_2,\ldots,p_n,j],a,X)$ includes\\
  $([p_1,p_2,\ldots,s,\ldots,p_n],0,0,\ldots,1,\ldots,0,\alpha)$ if
  $(s,\alpha) \in f_j(r,a,X)$ where $1<j\leq n$,
\item For $switch_j(r)=1$,
  $$f([p_1,p_2,\ldots,r,\ldots,p_n,j],a_1,a_2,\ldots,a,\ldots,a_n,X) =
  ([p_1,p_2,\ldots,r,\ldots,p_n,1],0,0,\ldots,0,X)$$ where $1<j\leq n$.  
\end{enumerate}
We note that there are $n$ components and we need to simulate all $n$
pushdowns with one pushdown. To perform this, we simulate each component
sequentially starting with the master component which is usually the
first component. The states are taken as a $n+1$-tuple
$[p_1,p_2,\ldots,p_n,i]$ where $i$ denotes that~$i^{th}$ component is
being simulated.  When the master component through some transition
introduces a query symbol $Q_j$, the system shifts the control to the
$j^{th}$ component and starts simulating it.

The simulation is done by carrying out the transitions of $j^{th}$
component within the transition of the multi-head pushdown automata
by using:
\begin{itemize}
\item the $j^{th}$ head of $M$;
\item the transition of $f_j$ in $f$;
\item the single stack available to simulate all the stacks of
  ${\mathcal A}$ but one at a time. 
\end{itemize}
When simulating the $j^{th}$ component if the system arrives at a state
where the switch becomes one, then the system stops the simulation and
shifts back to simulating the master component (first component) using $f_1$. 

This cycle of shifting of control from set of transition of master to
the set of transition of the component $j$ and back forth happens for
every query symbol $Q_j$ appearing in the stack when using the
transition of the master (first) component. 

Hence the above steps takes care of the communication from the $j^{th}$
component to the master, i.\,e., the first component.

Now $M$ has to take care of those transitions of the non-communicating
components and those communicating components that do transitions that
are not communicated to the master component. This is carried out by
the following set of transitions.
\begin{enumerate}
\item For $p\in F_1$, $$f([p,p_2,\ldots,p_n,1],\dollar,a_2,\ldots,a_n,X)$$
  includes $$([End_1,p_2,\ldots,p_n,2],0,0,\ldots,0,Z_2).$$
\item For $1<j\leq n$,
  $$f([End_1,\ldots,End_{j-1},p,p_{j+1},\ldots,p_n,j],\dollar,\ldots,\dollar,a,a_{j+1},\ldots,a_n,X)$$
  includes
  $$([End_1,\ldots,End_{j-1},q,p_{j+1},\ldots,p_n,j],0,\ldots,1,\ldots,0,\alpha)$$
  if $(q,\alpha)\in f_j(p,a,X)$.
\item For $p\in F_j$ and $1<j< n$,
  $$f([End_1,\ldots,End_{j-1},p,p_{j+1},\ldots,p_n,j],\dollar,\ldots,\dollar,a_{j+1},\ldots,a_n,X)$$
  includes
$$([End_1,\ldots,End_{j-1},End_j,p_{j+1},\ldots,p_n,j+1],0,\ldots,0,Z_{j+1}).$$
\item  For $p\in F_n$,
  $$f([End_1,\ldots,End_{n-1},p,n],\dollar,\ldots,\dollar,\lambda,X)$$
  includes
  $$([End_1,End_2,\ldots,End_n,n],0,\ldots,0,\alpha).$$
\end{enumerate}
The second set of transitions given above takes care of the
transitions of (1) those components not requested for communication by
the master and (2) those components that communicated to master but
has not yet read the full input string.  Suppose the first component
(master component) in ${\mathcal A}$ reaches a final state $p$ and
reads the end marker $\dollar $, then $M$ reaches the state wherein the
first ordinate is $End_1$. After it reaches this stage, $M$ switches
to the simulation of the second component of ${\mathcal A}$ until it
reaches the end marker $\dollar $. When reaching the end marker if the
state of the second component of ${\mathcal A}$ is a final state then
$M$ reaches the state wherein the second ordinate is $End_2$ and
switches to the simulation of the third component. If all $n$
components in ${\mathcal A}$ after reading the entire input string
arrives at a final state then $M$ arrives at the final state
$[End_1,End_2,\ldots,End_n]$. And hence the string accepted by
${\mathcal A}$ will be accepted by $M$. By construction it should also
be clear that $M$ does not accept strings that are not accepted 
by~${\mathcal A}$.%
\end{proof}

\section{A variant of RPCPA}
\label{section5}

In this section we define a restricted version of $\RPCPA$ and show the
implications of known communication property on it.

Since $\RPCPA$ is a non-centralized version, two or more components can
query other components. So, there are few scenarios possible:
\begin{enumerate}
\item There might be some components which neither queries nor is queried.
\item Some components might query other components (possibly at the same time)
  but it is not queried by any other component.
\item Some components might be queried (possibly by more than one
  component at the same time) but it might not query any
  other component.
\item Some components might be queried and might query other
  components. Possibly it can occur at the same step.
\end{enumerate}

Keeping these scenarios in hand we define a restrictive version of
parallel communicating pushdown automata systems in its returning
version in the sequel.

Let ${\mathcal A}=(V, \Delta, A_1, A_2, \dots , A_n, K)$ where for
each $1\leq i\leq n$, $$A_i=(Q_i,V,\Delta,f_i,q_i,Z_i,F_i)$$ be a
$\RPCPA$. 

We denote the set $\IN$ as the set containing all querying components
of ${\mathcal A}$ and $\OUT$ as the set consisting all queried
components of ${\mathcal A}$. Moreover, we denote the set $\IN(A_j)$
(for $1\leq j \leq n$) as the set of all components that $A_j$
queries. Similarly, the set $\OUT(A_j)$ (for $1\leq j \leq n$) denotes
the set of all components that queried $A_j$. Now we define
$\simpleRPCPA$.

\begin{definition}
A returning parallel communicating pushdown automata system~${\mathcal
  A}$ is said to be a {\em simple returning parallel communicating
  pushdown automata system} if it satisfies the following two
conditions:
\begin{enumerate}
\item For any $A_i$, $A_j$ ($1\leq i,j\leq n$ and $i \neq j$), $\IN(A_i)\cap
  \IN(A_j)=\emptyset$, and
\item If $A_j \in \IN(A_i)$ ($1\leq i,j\leq n$) then $A_i \not\in
  \IN(A_k)$ for any other $k$ where $1 \leq k \leq n$.
\end{enumerate}
\end{definition}

We note here that the above definition of $\simpleRPCPA$ restricts the
normal $\RPCPA$ by not allowing any two components to query the same
component and not allowing a component that queries to act as a
communicator to any other querying component. These two restrictions
make the communication scenarios, mentioned above, a kind of one-way
communication. A little insight into the definition of $\simpleRPCPA$
would exemplify the fact that it can be decomposed into a collection
of $\RCPCPAs$. This can be seen as follows:

Assume that ${\mathcal A}$ is a $\simpleRPCPA$. Now consider the set
$\IN$ of ${\mathcal A}$ -- that contains all querying components of
${\mathcal A}$. Let $\card{\IN}$ be $m$. Treating each of these $m$
components as masters together with the components that each of these
$m$ components queries we can have $m$ number of $\RCPCPAs$. The
components which are neither querying nor getting queried have to be
combined with any of these $\RCPCPAs$. Hence we have decomposed
$\simpleRPCPA$ into $m$ number of $\RCPCPAs$.

The above discussion gives the following Lemma:

\begin{lemma} 
\label{lem1}
Given a $\simpleRPCPA$ ${\mathcal A}$, $L({\mathcal A})$ can be
written as $$L({\mathcal A}) = L({\mathcal A}_1)\cap L({\mathcal A}_2) \cap
\cdots \cap L({\mathcal A}_m)$$ where for each $i(1 \leq i \leq m)$,
${\mathcal A}_i$ is a $\RCPCPA$.
\end{lemma}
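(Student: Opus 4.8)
The plan is to make rigorous the decomposition already sketched in the paragraph preceding the Lemma, turning the informal ``treat each querying component as a master'' idea into a genuine proof that $L(\mathcal{A})$ equals the intersection of the languages of the $m$ constructed $\RCPCPA$ systems. First I would fix notation: let $\IN = \{A_{c_1}, A_{c_2}, \dots, A_{c_m}\}$ be the querying components of $\mathcal{A}$. For each $t$ with $1 \leq t \leq m$, I define a $\RCPCPA$ $\mathcal{A}_t$ whose master is $A_{c_t}$ and whose remaining components are exactly those in $\IN(A_{c_t})$, the set of components queried by $A_{c_t}$; the components that neither query nor are queried are distributed arbitrarily (say, all attached to $\mathcal{A}_1$) as inert extra components that simply read the input. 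The two defining conditions of a $\simpleRPCPA$ are precisely what make this well-defined: condition~(1) ($\IN(A_i)\cap\IN(A_j)=\emptyset$ for $i\neq j$) guarantees that no component is queried by two different masters, so the sets $\IN(A_{c_t})$ are pairwise disjoint and each non-master component lands in at most one $\mathcal{A}_t$; condition~(2) guarantees that a querying component never serves as a communicator for another querying component, so no master $A_{c_t}$ appears as a queried component inside a different $\mathcal{A}_{t'}$, keeping the roles of ``master'' and ``slave'' consistent across the decomposition.

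The core of the argument is the two inclusions. For $L(\mathcal{A}) \subseteq \bigcap_t L(\mathcal{A}_t)$, I would take an accepting computation of $\mathcal{A}$ on input $x$ and, for each $t$, project it onto the components appearing in $\mathcal{A}_t$. The key observation is that by conditions~(1) and~(2), the communication steps involving $A_{c_t}$ and its queried components form a closed subsystem: the query symbols placed by $A_{c_t}$ only trigger transfers from components in $\IN(A_{c_t})$, and those components are not disturbed by any query from outside $\mathcal{A}_t$ (since a component queried by $A_{c_t}$ is not queried by any other master). Hence the restriction of the global computation to the components of $\mathcal{A}_t$ is itself a valid accepting computation of $\mathcal{A}_t$, because every component reads the same input $x$ and each must end in a final state in the original accepting run. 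The converse inclusion $\bigcap_t L(\mathcal{A}_t) \subseteq L(\mathcal{A})$ is where the real work lies: given that $x$ is accepted by every $\mathcal{A}_t$, I must reassemble the independent accepting computations into one global accepting computation of $\mathcal{A}$. Because each component of $\mathcal{A}$ belongs to exactly one $\mathcal{A}_t$ and the communication patterns in different $\mathcal{A}_t$ never interfere (again by the disjointness and non-overlap conditions), the local transition sequences can be interleaved into a single schedule on the same common input, yielding an accepting run of $\mathcal{A}$ that witnesses $x \in L(\mathcal{A})$.

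The main obstacle I anticipate is the \emph{synchronization} issue in the converse direction. In a $\PCPA$ all components move in lockstep on the shared input $x$, and communication steps are global and take priority over ordinary moves. When I have $m$ separate accepting computations of the $\mathcal{A}_t$, their communication steps occur at possibly incompatible time indices, so I cannot simply concatenate them; I must produce a single interleaving in which every component reads $x$ left to right at a compatible rate and every communication (which in $\mathcal{A}$ is a simultaneous global event) still respects the $\vdash_r$ priority rule. I would handle this by padding the individual computations with $\epsilon$-moves so that their step-lengths agree and the communication steps of the distinct subsystems are slotted into disjoint time positions; the $\simpleRPCPA$ conditions ensure these positions can indeed be chosen disjoint, since no component participates in two subsystems at once. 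Verifying that this padded, interleaved sequence is genuinely a legal $\vdash_r$-computation of $\mathcal{A}$, and that it terminates with all $n$ components simultaneously in final states having consumed all of $x$, is the technical heart of the proof and the step I would write out most carefully.
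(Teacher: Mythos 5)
Your decomposition is exactly the paper's: the paper's entire proof of this lemma is the informal paragraph preceding it (take the querying components as masters, group each with the components it queries, attach the inert components arbitrarily), and your use of the two defining conditions of a $\simpleRPCPA$ to make that grouping well defined matches the sketch. Where you go beyond the paper is in trying to actually verify the two inclusions, and your diagnosis is right that $\bigcap_{t} L({\mathcal A}_t)\subseteq L({\mathcal A})$ is where the work lies. (In the forward direction, one small repair: projection alone is not enough, since a global communication step involving only other subsystems leaves the components of ${\mathcal A}_t$ completely idle, and an idle step is not a legal step of ${\mathcal A}_t$; you must also delete those steps from the projected run.)

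The genuine gap is the padding step. Under the paper's semantics, a non-communication step requires \emph{every} component to execute a transition --- condition (i) demands $(p_i,\beta_i')\in f_i(s_i,a_i,B_i)$ for all $i$ --- and there is no ``stay'' option; a component need not have any applicable $\epsilon$-transition in a given configuration, so computations cannot in general be padded with $\epsilon$-moves. This is not a technicality that careful writing can fix: it is exactly the point where the statement itself is in danger. Consider four components over $\{a\}$, where $A_1$ (a master whose only query transition leads to a dead state) and $A_2$ read one $a$ per step with no $\epsilon$-moves, while $A_3$ (a second master, again with only a dead query) and $A_4$ are forced to alternate one reading step with one $\epsilon$-step. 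Both $\simpleRPCPA$ conditions hold, the subsystem built from $(A_1,A_2)$ accepts $a^n$ only in exactly $n$ steps, and the one built from $(A_3,A_4)$ only in exactly $2n$ steps; hence every $a^n$ lies in the intersection, yet the full system, which must advance all four components at every non-communication step, accepts only $\epsilon$. So no interleaving or $\epsilon$-padding can close your second inclusion; an honest proof needs either an extra hypothesis (e.g., every component has a stay-put $\epsilon$-loop available) or a different construction of the ${\mathcal A}_t$. The same synchronization issue is silently glossed over by the paper's own sketch, so your attempt is more rigorous than the original precisely at the spot where both break down.
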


By Theorem~\ref{thm2} and Lemma~\ref{lem1} we have the following theorem:

\begin{theorem}
\label{thm4}
For any $\simpleRPCPA$ ${\mathcal A}$, $L({\mathcal A})$ can be written
as $$L({\mathcal A})=L(M_1)\cap L(M_2)\cap \cdots \cap L(M_m)$$ where $m\geq 1$
and each $M_i (1 \leq i \leq m)$ is a multi-head pushdown automata.
\end{theorem}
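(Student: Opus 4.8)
The plan is to chain together the two structural facts already established for these systems, so that the theorem follows as a purely compositional corollary. First I would invoke Lemma~\ref{lem1}, which guarantees that the language of any $\simpleRPCPA$ ${\mathcal A}$ decomposes as a finite intersection $L({\mathcal A}) = L({\mathcal A}_1)\cap L({\mathcal A}_2)\cap\cdots\cap L({\mathcal A}_m)$, where each ${\mathcal A}_i$ is a $\RCPCPA$ obtained by singling out one querying component of ${\mathcal A}$ as master, together with the components it queries, and distributing the components that neither query nor are queried among these pieces. The two defining conditions of a $\simpleRPCPA$ are exactly what makes this decomposition well defined: condition~(1) ensures that no component lies in two distinct $\IN$-sets, so the pieces do not fight over a shared communicator, and condition~(2) ensures that a querying component is never itself used as a communicator, so each piece is a genuine centralized returning system. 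Here $m=\card{\IN}$, and the degenerate case $m=1$ (no genuine querying, so ${\mathcal A}$ is treated as a single factor) keeps $m\geq 1$.

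The second step replaces each $\RCPCPA$ factor by an equivalent multi-head pushdown automaton. For this I would apply Theorem~\ref{thm3}: for every $\RCPCPA$ ${\mathcal A}_i$ there is a multi-head pushdown automaton $M_i$ with $L({\mathcal A}_i)=L(M_i)$. Substituting these equalities into the intersection from the first step gives $L({\mathcal A})=L(M_1)\cap L(M_2)\cap\cdots\cap L(M_m)$ with each $M_i$ a multi-head pushdown automaton, which is precisely the claim. No new automaton construction is required at this stage, since Theorem~\ref{thm3} already supplies each $M_i$ and the intersection structure is transported verbatim.

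Because both ingredients are in hand, there is no genuinely hard construction inside this proof; the real content was front-loaded into Lemma~\ref{lem1} and Theorem~\ref{thm3}. The one point that deserves care, and which I would flag as the only possible obstacle, is confirming that the statement of Lemma~\ref{lem1} is a literal set-theoretic identity and not merely an inclusion in one direction, since the substitution step relies on exact equality of languages. The one-way communication enforced by the $\simpleRPCPA$ conditions is what secures this: an accepting computation of ${\mathcal A}$ on an input $x$ restricts, on the same $x$, to a synchronized accepting computation of each factor ${\mathcal A}_i$, and conversely an input accepted simultaneously by all $m$ factors can be recombined into a single accepting computation of ${\mathcal A}$ because the factors never interfere on overlapping communications. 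Once that equivalence is taken from Lemma~\ref{lem1}, the two invocations close the argument with nothing further to compute.
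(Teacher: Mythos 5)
Your proof is correct and takes essentially the same route as the paper: invoke Lemma~\ref{lem1} to decompose $L({\mathcal A})$ into a finite intersection of $\RCPCPA$ languages, then replace each factor by an equivalent multi-head pushdown automaton and substitute. In fact you cite the appropriate result (Theorem~\ref{thm3}) for the second step, whereas the paper's own proof mistakenly writes Theorem~\ref{thm2} (the known-communication theorem for $\RPCPA$) where it clearly means the $\RCPCPA$-to-multi-head-PDA equivalence.
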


\begin{proof*} Let 
${\mathcal A}=(V, \Delta, A_1, A_2, \dots, A_n, K)$ where for each $1\leq i\leq n$,
$$A_i=(Q_i,V,\Delta,f_i,q_i,Z_i,F_i)$$ be a $\simpleRPCPA$. By
Lemma~\ref{lem1}, $L({\mathcal A})$ can be written as finite intersection
of $\RCPCPA$. Hence $L({\mathcal A}) = L({\mathcal A}_1)\cap L({\mathcal A}_2)
\cap \cdots \cap L({\mathcal A}_m)$ where for each $i$ ($1 \leq i \leq m$),
${\mathcal A}_i$ is a $\RCPCPA$.

Now by Theorem~\ref{thm2}, for each $\RCPCPA$ there exists an equivalent
multi-head pushdown automata~$M$. Hence if $M_i$ is the corresponding
multi-head pushdown automata for ${\mathcal A}_i$ then we
have 
\begin{equation}
L({\mathcal A})=L(M_1)\cap L(M_2)\cap \cdots \cap L(M_m).\tag*{\qed}
\end{equation}
\end{proof*}
 
\section{Conclusion}
\label{section6}
We defined a property called known communication for parallel
communicating pushdown automata in general. Using this property we
showed that the acceptance power of the returning centralized pushdown
automata ($\RCPCPA$), which was left open in \cite{3}, is equivalent to
that of multi-head pushdown automata.  The above result also implies
that a restrictive class of returning parallel communicating pushdown
automata can be written as a finite intersection of multi-head
pushdown automata.

The known communication property for the non-returning variants of
PCPA is still open. Our proof for the returning variants of PCPA does
not hold good for the non-returning variants as the communicating
stacks retain the stack contents after communication and there is no
way of checking if they have been communicated or not. One possible
way of checking if the communication is taking place is by polling each
component by looking for any query symbols in its stacks. But this would
again amount to some more communications. And so this will result in a
recursive argument. Another interesting problem to look into is the
following. Can we relax the restrictions in $\simpleRPCPA$ and still
get the same result as in Theorem~\ref{thm4}?

\bibliographystyle{eptcs}
\bibliography{balan}

\end{document}
